\documentclass[conference]{IEEEtran}
\IEEEoverridecommandlockouts

\usepackage{cite}
\usepackage{amsmath,amssymb,amsfonts}
\usepackage{algorithmic}
\usepackage{graphicx}
\usepackage{textcomp}
\usepackage{xcolor}
\usepackage{amsmath, amssymb, amsthm}
\usepackage{url}

\newtheorem{theorem}{Theorem}
\newtheorem{lemma}[theorem]{Lemma}
\newtheorem{definition}[theorem]{Definition}

\newtheorem{remark}[theorem]{Remark}
\def\BibTeX{{\rm B\kern-.05em{\sc i\kern-.025em b}\kern-.08em
    T\kern-.1667em\lower.7ex\hbox{E}\kern-.125emX}}
\begin{document}

\title{Polarization on Bounded Euclidean Spaces\\ and More General Compact Groups\\
\thanks{L.G. and I.K. were supported in part by the EPSRC funded INFORMED-AI project EP/Y028732/1. S.M.S.\ and A.B.W.\ were supported in part by the US NSF under grant CCF-2306278.}
}

\author{%
\IEEEauthorblockN{Sharang M. Sriramu}
\IEEEauthorblockA{Fraunhofer Heinrich Hertz Institute \\
\footnotesize\texttt{sharang.mudlapura.sriramu@hhi.fraunhofer.de}}
\and
\IEEEauthorblockN{Lampros Gavalakis}
\IEEEauthorblockA{U. of Cambridge \\
\footnotesize\texttt{lg560@cam.ac.uk}}
\and
\IEEEauthorblockN{Ioannis Kontoyiannis}
\IEEEauthorblockA{U. of Cambridge \\
\footnotesize\texttt{ik355@cam.ac.uk}}
\and
\IEEEauthorblockN{Aaron B. Wagner}
\IEEEauthorblockA{Cornell U. \\
\footnotesize\texttt{wagner@cornell.edu}}
}

\maketitle

\begin{abstract}
We establish polarization for memoryless sources taking values on the unit-perimeter circle, or on more general compact groups, under Arıkan’s polar transform. For sources with finite differential entropy, all but an $O((\log N)^{-1/12})$ fraction of the conditional differential entropies in successive cancellation order lie within any fixed positive distance of zero, where $N$ is the block length. Examples indicate that the true polarization rate is indeed
very slow: an inverse-logarithmic rate does not underestimate the polarization speed at moderate block lengths.
\end{abstract}

\begin{IEEEkeywords}
Polarization, channel simulation, source coding, polar codes, entropy power inequality, stability.
\end{IEEEkeywords}

\section{Introduction}
Given a source string $Y^N=(Y_1,\ldots,Y_N)$ consisting of $n$ independent and identically distributed (i.i.d.) source symbols $Y_i$ taking values in some finite alphabet $\mathcal{Y}$, a \emph{polarizing transform} is a bijection that maps $Y^N$ to a sequence $U^N$ whose conditional entropies, $$H(U_i| U_1,\ldots,U_{i-1}),$$ approach either $0$ or the logarithm of the alphabet size for all but a vanishing fraction of indices as $N$ grows. Arıkan~\cite{arikan2009channel,arikan2010source} identified a rich class of polarizing transforms for Bernoulli sources and introduced a linear construction with a recursive structure that admits a low-complexity implementation. 

Polarization has since been studied for sources over larger discrete alphabets. The linear transform proposed by Arıkan, with addition defined modulo the alphabet size, has been shown~\cite{csacsouglu2009polarization} to be polarizing for prime alphabet sizes, and other polarizing transforms have also been identified~\cite{csacsouglu2012polarization, mori2014source} for non-prime alphabet sizes. Further analysis on the speed of polarization for prime alphabet sizes has shown that the fraction of indices whose conditional entropies remain a fixed distance away from both extreme values decays at least polynomially fast in the block length $N$~\cite{blasiok2022general}. 

In the case of continuous sources, the polarization phenomenon remains poorly understood. Distributions over the real numbers do not lend themselves to an unambiguous definition of polarization analogous to the discrete case, as the conditional differential entropies $h(U_i| U_1,\ldots,U_{i-1})$ are unbounded in both directions; see~\cite{arikan2021entropy} for a discussion on generalizing the notion of polarization to various continuous settings and the challenges involved.

We study polarization for sources taking values in bounded continuous spaces such as the circle $\mathbb{T}=\mathbb{R}/\mathbb{Z}$, which we identify with the interval $[0,1]$; cf.~\cite[Example~5]{arikan2021entropy}. 
This is motivated in part by recent work that constructs lossy compression schemes by combining channel simulation~\cite{li2024channel} with nonlinear transform coding~\cite{balle2020nonlinear}, with the channel simulation being performed using polar codes~\cite{sriramu2024fast}. Nonlinear transform coding and related autoencoding-type architectures are known to struggle with compressing sources with 1-D group structure~\cite{Bhadane:Circle} when the latent variables are real-valued.  One potential remedy is to use latent variables that themselves have a 1-D group structure. Doing so would require a channel simulator, say using polar codes, that simulates channels with this structure (cf.~\cite{sriramu2024fast}). 

Our main theoretical result here (Theorem~\ref{thm:main} in Section~\ref{s:main}) establishes polarization for sources taking values in the circle $\mathbb T$ or in more general compact groups.
Specifically, we show that under the standard polar transform proposed by Arıkan, the fraction of transformed variables whose conditional differential entropy is bounded away from zero vanishes as the block length $N$ grows. Thus, 
most of the indices become nearly uninformative, with most of the information in the source concentrated in the remaining vanishing fraction of variables (cf.~\cite[Example~5]{arikan2021entropy}).

Moreover, we provide an explicit bound on the speed of polarization, showing that the vanishing fraction of informative indices is 
\begin{equation}
O\left(\frac{1}{(\log N)^{1/12}}\right),
\end{equation}
where, throughout, $\log$ denotes the natural logarithm and all familiar information-theoretic
functionals are expressed in nats.
While we certainly do not claim (or believe) that this bound is optimal, empirical evidence discussed in  Sections~\ref{sec:SpeedOfPolarization} and~\ref{speedSec} suggests that the actual rate of polarization is indeed extremely slow, being comparable to $1/(\log N)$.

Our analysis relies on a recent ``entropy-jump'' inequality for random variables on compact groups~\cite{gavalakis2026entropy}, which can be viewed as a stability result for an underlying entropy power inequality (EPI). The EPI-polarization connection is described in the following section; see also~\cite{gavalakis-rupert:26,gavalakis-goh:arxiv} for related recent entropy lower bounds. 

\section{Entropy Inequalities on Compact Groups} \label{sec:SpeedOfPolarization}

For binary random variables, a particularly strong entropy-jump inequality follows from Mrs.\ Gerber's lemma~\cite{WynerZiv1973}.  
From {\c{S}}a{\c{s}}o{\u{g}}lu~\cite[Lemma 2.2]{csacsouglu2012polarization}
and Guruswami and Xia~\cite[(11)-(12)]{GuruswamiXia2015} we have that for i.i.d.\ $X_1,X_2$ on $\mathbb F_2$,
\begin{equation} \label{binaryjump}
 H(X_1+X_2)-H(X_1)
 \geq
 \widehat\theta\,H(X_1)\biggl(1-\frac{H(X_1)}{\log 2}\biggr),
\end{equation}
with
\begin{equation}
 \widehat\theta
 =
 \inf_{0<x<1/2}
 \frac{h(2x(1-x))-h(x)}
      {h(x)(1-h(x))} > 0.799,
\end{equation}
where $h(x):=-x\log x-(1-x)\log(1-x)$ is the binary entropy function.
This quantitative entropy jump is a key ingredient in the latter's analysis of the {speed of polarization}. By a bootstrapping method similar to~\cite{arikan2009rate}, it yields that for $n = O(\log{\frac{1}{\epsilon}})$,
\begin{equation} \label{binaryspeed}
    \mathbb P \bigl( Z_n(W_n^{(I)}) \leq 2^{-\beta n}\bigr) \geq I(X;Y) - \epsilon,
\end{equation}
where $Z_n(W_n^{(I)})$ is the Bhattacharyya parameter of the $I$-th sub-channel, $I(X;Y)$ is the mutual information of the channel, and the probability is over the uniformly selected index $I$.

Interestingly, the later \(q\)-ary analogue of Guruswami and Velingker~\cite{GuruswamiVelingker2015} yields, for prime $q$,
\begin{align}
 &H(X_1+X_2)-H(X_1)
 \nonumber\\&\geq
 \alpha(q)\,H(X_1)\biggl(1-\frac{H(X_1)}{\log q}\biggr),
\end{align}
but the quantitative value of \(\alpha(q)\) furnished by their proof is extremely weak.
In fact, already for \(q=2\), the quantitative estimate resulting from their general argument is vastly weaker than the sharp binary estimate based on Mrs.\ Gerber's lemma. 



In the present work we use the entropy-jump inequality of~\cite[Theorem 2.10]{gavalakis2026entropy}. That estimate is significantly weaker but much more general than the entropy jump of the binary case~\eqref{binaryjump}. In addition, the related entropy-jump obtained in~\cite[Theorem 1.4]{gavalakis2026entropy} involves an extremely weak constant, with a value no greater than $3\times 10^{-5}$. Although we do not know whether sharper estimates exist, it is worth noting that the actual speed of polarization in the continuous setting seems to be quite slow as we discuss in Section~\ref{speedSec}. 

\section{Main Result} 
\label{s:main}

Let $\mathcal{X}$ be a standard Borel space, and let $\mathcal{Y}$ be
a connected, compact, metrizable abelian topological group, equipped with its Borel $\sigma$-algebra. Let $\mu$ denote the normalized Haar measure on $\mathcal{Y}$, so that $\mu(\mathcal{Y})=1$. All familiar information-theoretic functionals are defined in terms of $\mu$. For example, suppose $X$ takes values in ${\cal Y}$ and has law $P$. If $P$ has a density $f=\frac{dP}{d\mu}$ with respect to $\mu$, then the (differential) entropy is defined as $h(X)=-\int f\log f\,d\mu$; otherwise, we set $h(X)=-\infty$. In particular,  the uniform distribution on $\mathcal{Y}$ has zero entropy and all
entropies are nonpositive.

Let $P_{XY}$ be a joint distribution on $\mathcal{X} \times \mathcal{Y}$. We will refer to the corresponding marginal $P_X$ as the \emph{source distribution} and the induced kernel $P_{Y|X}$ as the \emph{channel}. For any such joint distribution, we will assume throughout that the mutual information $I(X;Y)$ associated with the channel is finite and in fact that 
\begin{equation} -\infty<h(Y| X) \leq 0.\end{equation}

Given $N=2^n$ i.i.d. pairs $(X^N, Y^N) \sim P_{XY}^{\otimes N}$, we define the polar domain representation~\cite{arikan2010source} as
\begin{equation}
    U^N = Y^NB_N{\begin{bmatrix}
        1 & 0\\
        1 & 1
    \end{bmatrix}}^{\otimes n }, \label{eq:HadamardPT}
\end{equation} 
where $B_N$ is the bit-reversal permutation.
Our main contribution is the following finite-$n$ result.

\begin{theorem} \label{thm:main}
    For all $n>1$ and $0 < \epsilon < \min\left\{ -h(Y| X), \frac{1}{8} \right\}$, we have
    \begin{align}
        \frac{\left| \{i: - h(U_i| U_1^{i-1}, X^N) \ge \epsilon \} \right|}{2^n} \le \frac{4 (-h(Y| X))^{1/6}}{(n\epsilon^{10})^{1/12}}, \label{eq:main}
    \end{align}
    where $N = 2^n$.
\end{theorem}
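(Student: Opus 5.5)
We follow the standard martingale route to polarization. The key new ingredient is a quantitative entropy jump on the group $\mathcal{Y}$, namely the inequality of~\cite[Theorem 2.10]{gavalakis2026entropy}.

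\emph{Step 1 (tree process).} Define the usual random walk on the polarization tree. Let $B_1,B_2,\ldots$ be i.i.d.\ fair bits, and let $H_k:=-h(U^{(k)}_{I}\mid \text{past},X)$ be the negated conditional entropy of the index reached after $k$ levels. For a single polar step, take two i.i.d.\ copies $(X_j,Y_j)$, $j=1,2$, of a conditional pair with conditional entropy $-H$. The chain rule on the bijection $(Y_1,Y_2)\mapsto(Y_1+Y_2,Y_2)$ preserves Haar measure, so
\[
h(Y_1+Y_2\mid X_1,X_2)+h(Y_2\mid Y_1+Y_2,X_1,X_2)=2h(Y_1\mid X_1).
\]
Hence $(H_k)$ is a bounded nonnegative martingale with $H_0=-h(Y\mid X)$. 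The quantity on the left of~\eqref{eq:main} is exactly $\mathbb P(H_n\ge\epsilon)$, where $I$ is uniform. By the martingale property, $\mathbb E H_k=-h(Y\mid X)$ for all $k$.

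\emph{Step 2 (entropy jump gives variance increments).} Conditionally on $X_1,X_2$, the variables $Y_1,Y_2$ are independent on $\mathcal Y$. Since $\mathcal Y$ is connected and compact abelian, there are no nontrivial finite subgroups that could trap mass. Averaging the entropy-jump inequality~\cite[Theorem 2.10]{gavalakis2026entropy} over the conditioning, and using convexity to handle the conditioning, we aim to obtain a jump of the form
\[
h(Y_1+Y_2\mid X_1,X_2)-h(Y_1\mid X_1)\ge c\,\phi(H),
\]
with $\phi(H)\gtrsim H^{a}$ for $H\ge\epsilon$, where the exponent (something like $\epsilon^{5}$) is what produces the $\epsilon^{10}$ in the statement. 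The martingale increment is $\pm\Delta_k$, where $\Delta_k$ is this jump. Consequently
\[
\mathbb E H_n^2-H_0^2=\sum_{k<n}\mathbb E\Delta_k^2.
\]
The left side is at most $\mathbb E H_n^2$. The difficulty is that $H_n$ is not bounded above a priori, since differential entropy can be $-\infty$. This is the main obstacle. To get around it, I would use a sub-additive surrogate such as $\mathbb E\sqrt{H_k}$ or $\mathbb E\min(H_k,1)$, which is a supermartingale whose decrement per step is controlled by $\Delta_k^2/H_k^{3/2}$. A telescoping argument then bounds $\sum_k\mathbb E[\Delta_k^2H_k^{-3/2}]\lesssim H_0^{1/2}$. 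This square-root-type potential is, I expect, the source of the power $(-h(Y\mid X))^{1/6}$ after the final root-taking.

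\emph{Step 3 (counting).} Since $\mathbb P(H_k\ge\epsilon)$ is nonincreasing in $k$ (or nearly so, by a maximal inequality), each of the $n$ levels contributes at least $\mathbb P(H_n\ge\epsilon)^{r}\cdot\mathrm{poly}(\epsilon)$ to the summed potential decrement. Here the power $r$ arises from a Hölder or Markov step that separates indices with large $H$, using $\mathbb E H_k=H_0$ to control the large-$H$ tail. This yields
\[
n\,\mathrm{poly}(\epsilon)\,\mathbb P(H_n\ge\epsilon)^{r}\lesssim H_0^{s}.
\]
Solving gives the form $\mathbb P\le 4H_0^{1/6}(n\epsilon^{10})^{-1/12}$, i.e.\ $r=12$. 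The constants $4$ and the restriction $\epsilon<1/8$ are then tuned in the final bookkeeping. Since the abstract's rate $1/12$ is fixed, I would work backwards from it to choose the surrogate and the Hölder exponents.

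The main obstacle is Step 2. We must convert the stability inequality, which concerns unconditional random variables on the group, into a conditional jump that is uniform over mixtures. We must also manage the unbounded upper range of $H_k$ so that it does not wreck the variance budget.
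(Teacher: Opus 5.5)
Your proposal has two genuine gaps.

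\textbf{The jump lemma is missing.} The heart of the argument is a uniform lower bound on the jump $\Delta=h(W_{b^K,0})-h(W_{b^K})$ whenever $-h(W_{b^K})\ge\epsilon$. You only announce it (``we aim to obtain''). The route you suggest, averaging the pointwise stability inequality over the inputs and using convexity, cannot work by itself. Pointwise, the inequality bounds the jump at $(z_1,z_2)$ below by $\max\{f(h(W|z_1)),f(h(W|z_2))\}$, and $f$ saturates at $f(-1/8)$. So the average $-h(W)\ge\epsilon$ can be carried by rare inputs. For example, if $-h(W|Z)=L$ with probability $\epsilon/L$ and $0$ otherwise, the averaged stability bound is $O(\epsilon/L)\to 0$ as $L\to\infty$.

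The missing idea (the paper's Lemma~\ref{lemma:StepwiseBound}) is to use, in addition, the elementary bound $h(Y_1+Y_2|z_1,z_2)\ge\max\{h(W|z_1),h(W|z_2)\}$. This gives $\Delta\ge\frac12 E|h(W|Z_1)-h(W|Z_2)|$, which in the example above yields $\Delta\approx\epsilon$. It controls how far $E\min\{-h(W|Z_1),1/8\}$ can fall below $\min\{-h(W),1/8\}\ge\epsilon$. Combined with Jensen for $t^5$, it gives the self-referential bound $\Delta\ge(\epsilon-2\Delta)_+^5/33$. Convexity of $t\mapsto t_+^5$ then yields $\Delta\ge\epsilon^5/34$.

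A side correction: $\mathbb{T}$ does have nontrivial finite subgroups, e.g.\ $\{0,1/2\}$. What connectedness rules out is proper open subgroups.

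\textbf{The counting does not give the stated exponent.} $E\min(H_k,1)$ is useless: $\min(\cdot,1)$ is linear on $[0,1]$, so it has zero decrement whenever $H_k\le 1/2$. The $\sqrt{H_k}$ potential is legitimate. But carrying out the H\"older step you describe, $\Pr(A)\le(E[H_k^{-3/2}\mathbf{1}_A])^{2/5}(EH_k)^{3/5}$, gives $n\tilde\epsilon^{10}(\Pr(H_n\ge\epsilon)-\tilde\epsilon/\epsilon)_+^{5/2}\lesssim H_0^2$. Optimizing over $\tilde\epsilon$ then yields only $\Pr(H_n\ge\epsilon)\lesssim H_0^{4/25}\epsilon^{-4/5}n^{-2/25}$, i.e.\ $r=25/2$. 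That does not imply the stated bound for large $n$. Choosing the surrogate ``backwards'' does not fix this: a power potential $H^p$ gives $r=13-p>12$, with constants degenerating as $p\to1$.

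Also, $\Pr(H_k\ge\epsilon)$ need not be monotone in $k$. The usable substitute is $\Pr(H_n\ge\epsilon)\le\Pr(H_k>\tilde\epsilon)+\tilde\epsilon/\epsilon$, by conditional Markov. This is why a second, lower threshold $\tilde\epsilon<\epsilon$ is essential.

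\textbf{What the paper does instead.} It keeps the exact quadratic-variation identity and removes the unboundedness by stopping. Your Step~1 calls $(H_k)$ bounded, but only $H_k\le 2^kH_0$ holds; the paper's $I_K$ is your $H_K$. Let $\sigma$ be the first time $I_K>M$, $\omega$ the first time $I_K\le\tilde\epsilon$, and $\tau=\sigma\wedge\omega$.
\begin{itemize}
\item Each child is at most twice its parent, so $I_{K\wedge\tau}\le 2M$ and $E[I_{K\wedge\tau}^2]\le 2MI_0$. With the jump lemma this gives $\Pr(\tau\ge n)\le 2312MI_0/(n\tilde\epsilon^{10})$.
\item Optional stopping gives $\Pr(\sigma\le n)\le I_0/M$ and $\Pr(I_n\ge\epsilon,\ \omega<n)\le\tilde\epsilon/\epsilon$.
\end{itemize}
Balancing these three terms, not a square-root potential, is what produces $I_0^{1/6}(n\epsilon^{10})^{-1/12}$ with constant below $4$.
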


\section{Proof}
We begin by examining a single recursive step of the polar transform.

\begin{definition}[Pairwise polarizing transform]
Let $Z_1$ and $Z_2$ be independent random variables with distribution $P_Z$ on some standard Borel space $\mathcal{Z}$. Given an arbitrary channel $W: \mathcal{Z} \to \mathcal{Y}$ and conditionally independent channel outputs $Y_1 \sim W(\cdot| Z_1)$ and $Y_2 \sim W(\cdot| Z_2)$, define their polar domain representations
\begin{align}
    U_1 &= Y_1 + Y_2, \label{eq:PT1}\\
    U_2 &= Y_2. \label{eq:PT2}
\end{align}
For an arbitrary measurable subet $A$ of $\mathcal{Y}$, define the transformed channels
\begin{multline}
    W_0(U_1 \in A| z_1, z_2) \\ = \Pr\left(Y_1 + Y_2 \in A| Z_1 = z_1, Z_2 = z_2 \right),  
    \end{multline}
    and
    \begin{multline}
    W_1(U_2 \in A| z_1, z_2, u_1) \\ = \Pr\left(Y_2 \in A| Z_1 = z_1, Z_2 = z_2, U_1 = u_1 \right).
\end{multline}
We will denote this transform by
\begin{align}
    (W_0, W_1) \leftarrow W.
\end{align}
For a channel $\hat{W}$ together with its induced input-output
pair $(\hat{Z},\hat{Y})$, we use the abbreviation
\begin{equation}
    h(\hat{W}) = h(\hat{Y}|\hat{Z}).
\end{equation}
In particular,
\begin{align}
    h(W)
    &=h(Y_1| Z_1)
      =h(Y_2| Z_2),                            \\
    h(W_0)
    &=h(U_1| Z_1,Z_2)
      =h(Y_1+Y_2| Z_1,Z_2),                    \\
    h(W_1)
    &=h(U_2| Z_1,Z_2,U_1)
      =h(Y_2| Z_1,Z_2,Y_1+Y_2).
\end{align}
We further write
\begin{equation}
    h(\hat{W}| z) = h(\hat{Y}|\hat{Z}=z)
\end{equation}
for the pointwise output entropy corresponding to the input
$\hat{Z}=z$.
\end{definition}

Starting from the channel $P_{Y|X}$, we define a binary tree in the usual fashion:
\begin{align}
    W &= P_{Y|X},\\
    (W_0, W_1) &\leftarrow W,\\
    (W_{b_1,\ldots,b_K,0}, W_{b_1,\ldots,b_K,1}) &\leftarrow W_{b_1,\ldots,b_K},
\end{align}
where $K \ge 1$ and each $b_i \in \{0,1\}$. We then draw i.i.d. uniform Bernoulli samples $B_1,B_2,\ldots$ to construct a random walk on the tree, with respect to which we can define the stochastic process
\begin{align}
    I_0 &= -h(W),\\
    I_K &= -h(W_{B^K}), \, K \ge 1.
\end{align}
We can express our objective in terms of $I_k$:
\begin{align}
    \Pr(I_n \ge \epsilon) &= \frac{\left| \{i: - h(U_i| U_1^{i-1}, X^N) \ge \epsilon \} \right|}{2^n}. \label{eq:Target}
\end{align}

\begin{lemma} \label{lemma:IConv}
    The process $(I_k)_{k\ge0}$ is a martingale that converges almost surely.
\end{lemma}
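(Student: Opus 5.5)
The plan is to show that one step of the pairwise polarizing transform conserves conditional entropy, deduce the martingale property from this, and obtain convergence from boundedness. Throughout I work with the filtration $\mathcal{F}_k=\sigma(B_1,\ldots,B_k)$.

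First I will establish the conservation identity
\begin{equation*}
    h(W_0)+h(W_1)=2h(W)
\end{equation*}
for any channel $W$ with $-\infty<h(W)\le 0$ and $(W_0,W_1)\leftarrow W$. The map $(y_1,y_2)\mapsto(y_1+y_2,y_2)$ is a bijection of $\mathcal{Y}\times\mathcal{Y}$. It preserves the product Haar measure $\mu\otimes\mu$, by translation invariance of $\mu$ together with Fubini. Hence, conditionally on $(Z_1,Z_2)=(z_1,z_2)$, the density of $(U_1,U_2)$ with respect to $\mu\otimes\mu$ is the density of $(Y_1,Y_2)$ composed with the inverse map. This gives
\begin{equation*}
    h(U_1,U_2\mid Z_1,Z_2)=h(Y_1,Y_2\mid Z_1,Z_2)=h(Y_1\mid Z_1)+h(Y_2\mid Z_2)=2h(W),
\end{equation*}
where the middle equality uses conditional independence and the i.i.d.\ structure. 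Applying the chain rule to the left-hand side yields $h(U_1\mid Z_1,Z_2)+h(U_2\mid Z_1,Z_2,U_1)=h(W_0)+h(W_1)$.

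The chain rule needs some care, because differential entropies on a compact group are nonpositive and may equal $-\infty$. The key observation is that $h(\cdot)\le 0$ always holds for probability densities with respect to a probability Haar measure. So $h(W_0)\le 0$ and $h(W_1)\le 0$, while their sum $2h(W)$ is finite, which forces both to be finite. I would justify the chain rule by writing the joint density as the product of the marginal density of $U_1$ and a regular conditional density of $U_2$ given $U_1$. Standard Borel assumptions guarantee that these disintegrations exist. Every term involved is bounded above by $0$, so the integrals are well defined and the decomposition is legitimate. I expect this measure-theoretic bookkeeping to be the main obstacle, though it is routine.

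Given the identity, the martingale property follows. The same argument applies verbatim to each $W_{b_1,\ldots,b_K}$, since the transform applied at every node has the pairwise form with input space $\mathcal{Z}$ equal to the accumulated conditioning variables. By induction, every $I_k$ is finite. Since $B_{k+1}$ is uniform and independent of $\mathcal{F}_k$,
\begin{equation*}
    \mathbb{E}[I_{k+1}\mid\mathcal{F}_k]=\tfrac12\bigl(-h(W_{B^k,0})-h(W_{B^k,1})\bigr)=-h(W_{B^k})=I_k.
\end{equation*}
For convergence, note that each $I_k$ takes finitely many values and that $0\le I_k$. This makes $(I_k)$ a nonnegative martingale with $\mathbb{E}[I_k]=I_0=-h(Y\mid X)<\infty$. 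By the martingale convergence theorem for nonnegative (super)martingales, $(I_k)$ converges almost surely. Convergence in $L^1$ would require uniform integrability, which the lemma does not claim, so that stronger statement is not needed here.
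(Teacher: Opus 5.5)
Your proposal is correct and follows the paper's own proof. The paper argues in three steps: entropy conservation under the pairwise transform gives the martingale property, maximality of the uniform (Haar) distribution gives $I_k\ge 0$, and Doob's convergence theorem for nonnegative (super)martingales gives almost-sure convergence. You take the same three steps and add detail the paper omits, namely the measure-preserving bijection and chain rule behind the conservation identity, and the finiteness and integrability of each $I_k$.
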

\begin{proof}
    The martingale property of $I_k$ follows from the fact that every pairwise polarizing transform is a entropy-preserving transform. Since the uniform distribution over $\mathcal{Y}$ is the maximum entropy distribution, it follows that $I_k \ge 0$. Convergence then follows from Doob's supermartingale convergence theorem.
\end{proof}
We consider the case where 
\begin{equation}
    n \epsilon^{10} > (4I_0^{1/6})^{12}, \label{eq:n_epsilon_condition}
\end{equation}
since the bound in~(\ref{eq:main}) holds trivially otherwise.
We begin by defining two thresholds $M$ and $\tilde{\epsilon}$,
\begin{align}
    M &= \left(\frac{10^{10}I_0^{10}n \epsilon^{10}}{2312}\right)^{1/12},  \label{eq:Mdef}\\
    \tilde \epsilon &= \left(\frac{231200 I_0^2 \epsilon^2}{n}\right)^{1/12}. \label{eq:epsdef}
\end{align}
The condition from~(\ref{eq:n_epsilon_condition}) ensures that
\begin{align}
    M > I_0 \text{ and } \tilde \epsilon < \epsilon.
\end{align}
We also define the stopping times
\begin{align}
    \tau &= \inf\{K: I_K \le \tilde \epsilon \text{ or } I_K>M\},\\
    \sigma &= \inf\{K: I_K>M\},\\
    \omega &= \inf \left\{K: I_K \le \tilde \epsilon\right\}.
\end{align}
and the stopped martingales $J_K = I_{K \wedge \tau}$ and $\theta_K = I_{K \wedge \sigma}$. We then have
\begin{align}
    &\Pr(I_n \ge \epsilon)\nonumber\\
    &\le \Pr(I_n \ge \epsilon, \sigma > n) + \Pr(\sigma \le n)\nonumber\\
    &\le \Pr(I_n \ge \epsilon, \sigma > n, \tau < n) + \Pr(\tau \ge n) + \Pr(\sigma \le n)\nonumber\\
    &\le \Pr(I_n \mathbb{I}\{\omega < n\} \ge \epsilon) +  \Pr(\tau \ge n) + \Pr(\sigma \le n). \label{eq:AchProofMain}
\end{align}
Next, we derive upper bounds for each of the quantities in~(\ref{eq:AchProofMain}).

\paragraph{Upper bound on $\Pr(\tau \ge n)$} \label{par:tau}
For all $K > 1$, using the orthogonality of martingale increments, 
\begin{align}
    E\left[J_{K+1}^2\right] &\ge E\left[\sum\limits_{i=0}^K (J_{i+1} - J_i)^2\right]\nonumber\\
    &= E\left[\sum\limits_{i = 0}^{K \wedge (\tau-1)} (I_{i+1} - I_i)^2\right].
\end{align}
The lower bound from Lemma~\ref{lemma:StepwiseBound} below gives
\begin{align}
    E[K \wedge \tau] \le \frac{1156 E\left[I_{(K+1) \wedge \tau}^2\right]}{{\tilde \epsilon}^{10}}.
\end{align}
From the definition of the stopping time, it follows that $I_{(K+1) \wedge \tau} \le 2M$. Therefore,
\begin{equation}
    E\left[I_{(K+1) \wedge \tau}^2\right] \le 2ME\left[ |I_{(K+1) \wedge \tau}| \right] = 2MI_0,
\end{equation}
and hence,
\begin{equation}
    E[K \wedge \tau] \le \frac{2312MI_0}{{\tilde \epsilon}^{10}}. \label{eq:Boundtau}
\end{equation}
Now, observe that
\begin{equation}
    \Pr(\tau \ge K) = \Pr\left(\tau \wedge K = K\right)
    \le \Pr\left(\tau \wedge K \ge K\right).
\end{equation}
Applying Markov's inequality with $K=n$, we obtain
\begin{equation}
    \Pr(\tau \ge n) \le \frac{E\left[ \tau \wedge n\right]}{n}
    \le \frac{2312MI_0}{n{\tilde \epsilon}^{10}},
\end{equation}
where the second step follows from~\eqref{eq:Boundtau}.

\paragraph{Upper bound on $\Pr(\sigma \le n)$} \label{par:sigma}

For all $K$, we have
\begin{align}
    \theta_K &\ge M \mathbb{I}\left\{\sigma \le K\right\}.
\end{align}
Taking expectations and rearranging, we obtain
\begin{equation}
    \Pr\left(\sigma \le K\right) \le \frac{E[\theta_K]}{M} = \frac{I_0}{M},
\end{equation}
for all $K$, and hence,
\begin{equation}
        \Pr\left(\sigma \le n\right) \le \frac{I_0}{M}. \label{eq:BoundSigma}
\end{equation}

\paragraph{Upper bound on $\Pr(I_n \mathbb{I}\{\omega < n\} \ge \epsilon)$}
Markov's inequality gives
\begin{align}
    \Pr(I_n \mathbb{I}\{\omega < n\} \ge \epsilon) &\le \frac{E\left[I_n \mathbb{I}\{\omega < n]\right\}}{\epsilon}. \label{eq:LowerExitBound}
\end{align}
Since we have that
$E[I_K - I_{K \wedge \omega}] = 0$ for all $K$,
it follows that $E\left[I_n - I_{n \wedge \omega}\right] = 0$, which in turn implies
that $E\left[(I_n - I_{n \wedge \omega})\mathbb{I}\{\omega < n\}\right] = 0$,
and therefore,
\begin{equation}
    E\left[I_n \mathbb{I}\{\omega < n\}\right] = E\left[I_{n \wedge \omega} \mathbb{I}\{\omega < n\}\right] \le \tilde \epsilon.
\end{equation}
Substituting this in~(\ref{eq:LowerExitBound}), we obtain
\begin{align}
    \Pr(I_n \mathbb{I}\{\omega < n\} \ge \epsilon) \le \frac{\tilde \epsilon}{\epsilon}.
\end{align}

Using the above upper bounds in~(\ref{eq:AchProofMain}), gives
\begin{align}
    \Pr(I_n \ge \epsilon) &\le \frac{2312MI_0}{n{\tilde \epsilon}^{10}} + \frac{I_0}{M} + \frac{\tilde \epsilon}{\epsilon}.
\end{align}
And substituting the definitions from~(\ref{eq:Mdef}) and~(\ref{eq:epsdef}) yields
\begin{align}
    \Pr(I_n \ge \epsilon) &\le 4I_0^{1/6} \cdot(n \epsilon^{10})^{-1/12},
\end{align}
completing the proof of the theorem.
\hfill$\Box$

\begin{remark}{\em
The results of Lemma~\ref{lemma:IConv} and Theorem~\ref{thm:main} imply that the martingale $(I_k)_{k\ge0}$ converges to $0$ almost surely:
    \begin{equation}
        I_\infty := \lim\limits_{n \to \infty} I_n = 0 \text{ a.s.}
    \end{equation}
    On the other hand, whenever $I_0>0$,
    \begin{equation}
        \lim\limits_{n \to \infty} E[|I_n - I_\infty|] = \lim\limits_{n \to \infty} E[I_n]
        = I_0,
    \end{equation}
therefore $(I_k)_{k\ge0}$ is not uniformly integrable and does not converge in $L_1$.
See the discussion and conjecture in~\cite[Example~5]{arikan2021entropy}.
}\end{remark}

\begin{figure*}[t]
    \centering
 \scalebox{0.7}{\includegraphics{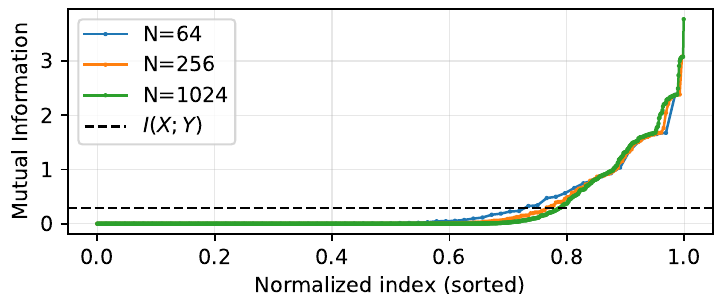}}
        \caption{Mutual information per subchannel, as
        defined in~(\ref{eq:MI}), for the example in Sec.~\ref{speedSec}. For any $\epsilon > 0$, the fraction of subchannels for which the mutual
        information exceeds
        $\epsilon$ tends to zero slowly with $N$. 
        The dotted line shows the mutual information
        of the unpolarized channel in~(\ref{eq:origchannel}).
        Note that
        the mutual information of the best subchannel increases without bound.}
        \label{fig:first}
    \end{figure*}
    
    \begin{figure}[t]
        \centering
        \scalebox{.7}{\includegraphics{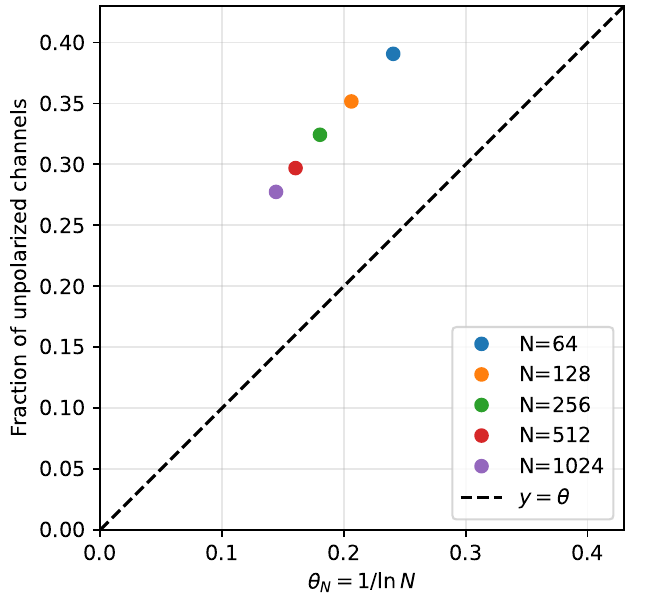}}
        \caption{The fraction of subchannels whose mutual
        information exceeds $\epsilon = 0.05$ nats plotted
        against the function $1/\ln N$. }
        \label{fig:second}
    \end{figure}

\begin{lemma} \label{lemma:StepwiseBound}
    Given $0 < \epsilon < \frac{1}{8}$, let $b^K \in \{0,1\}^K$ be a binary string such that
    \begin{equation}
        -h(W_{b^K}) \ge \epsilon.
    \end{equation}
    Then, the transformed channels
    \begin{align}
        (W_{b^K,0}, W_{b^K,1}) \leftarrow W_{b^K},
    \end{align}
    satisfy
    \begin{align}
        h(W_{b^K,0}) - h(W_{b^K}) = h(W_{b^K}) - h(W_{b^K,1}) \ge \frac{\epsilon^5}{34}.
    \end{align}
\end{lemma}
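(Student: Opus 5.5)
The plan is to reduce the statement to a single-step entropy-jump inequality for the sum of two conditionally independent $\mathcal{Y}$-valued random variables, and then to invoke the entropy-jump result of \cite[Theorem 2.10]{gavalakis2026entropy}.

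\textbf{Step 1: the equality.} Write $W=W_{b^K}$ with input $Z$. The map $(Y_1,Y_2)\mapsto(Y_1+Y_2,Y_2)$ is a measure-preserving bijection of $\mathcal{Y}^2$, because Haar measure is translation invariant. Hence
\begin{equation*}
h(U_1,U_2\mid Z_1,Z_2)=h(Y_1,Y_2\mid Z_1,Z_2)=2h(W).
\end{equation*}
By the chain rule, $h(W_0)+h(W_1)=2h(W)$, which is exactly the claimed equality. (This is the same fact that underlies Lemma~\ref{lemma:IConv}.) It therefore suffices to prove $h(Y_1+Y_2\mid Z_1,Z_2)-h(Y_1\mid Z_1)\ge \epsilon^5/34$.

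\textbf{Step 2: pointwise jump.} Conditionally on $Z_1=z_1$ and $Z_2=z_2$, the variables $Y_1$ and $Y_2$ are independent with laws $W(\cdot\mid z_1)$ and $W(\cdot\mid z_2)$. I would apply \cite[Theorem 2.10]{gavalakis2026entropy}. I expect it to give a bound of the form
\begin{equation*}
h(Y_1+Y_2)\ge \max\{h(Y_1),h(Y_2)\}+c\,\phi\bigl(-h(Y_1)\bigr)
\end{equation*}
for independent variables on a connected compact abelian group, where $\phi$ is polynomial (e.g.\ degree $4$ or $5$) near zero. This bound is used pointwise in $(z_1,z_2)$: let $a=-h(W\mid z_1)$ and $b=-h(W\mid z_2)$. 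Since entropy cannot decrease under independent addition on a group, we always have $h(Y_1+Y_2\mid z_1,z_2)\ge h(W\mid z_1)$. The jump term kicks in when $a$ is large.

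\textbf{Step 3: averaging.} Here $E[a]=-h(W)\ge\epsilon$, but $a$ is unbounded above, so the polynomial bound must be handled carefully. Let $G=\{z: -h(W\mid z)\ge \epsilon/2\}$. Since $a\ge0$, the excess $E[a\,\mathbb{I}\{a<\epsilon/2\}]\le\epsilon/2$, so $E[a\,\mathbb{I}_G]\ge \epsilon/2$. The difficulty is that $P(G)$ itself could be tiny while $a$ is huge on $G$. For this reason I would use the symmetric form of the jump: if the theorem yields a jump at least $\psi(\min(a,1/8))$ with $\psi$ increasing, and large $a$ gives a jump $\ge \psi(1/8)$, then convexity (or a Markov-type step) gives the averaged bound. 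A useful alternative is to compare $h(Y_1+Y_2)$ with the average $\tfrac12\bigl(h(Y_1)+h(Y_2)\bigr)$ and average over both $z_1$ and $z_2$. This is why the constraint $\epsilon<1/8$ appears: it caps the argument. Collecting the constants from the cited theorem, together with the factor losses from $\epsilon/2$ thresholds and from Jensen applied to a convex $\epsilon^5$ (which passes through $E$ favorably once truncated), should yield $\epsilon^5/34$.

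The main obstacle is Step 3: converting a pointwise jump that depends on the individual conditional entropies into a bound in terms of the average $-h(W)\ge\epsilon$ when the per-input entropies can be arbitrarily negative. The truncation at level $1/8$ combined with Jensen on the truncated variable is what I would try first. If the cited theorem's jump is monotone in $-h$, truncation loses nothing essential, and Jensen then gives $E[\psi(\min(a,1/8))]\ge\psi$ applied to a quantity of order $\epsilon$, which is the bound we need.
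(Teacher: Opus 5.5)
Your Step 1 is correct, and you located the real difficulty in Step 3. However, the remedy you propose there does not work, so there is a genuine gap.

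Write $a(z)=-h(W_{b^K}\mid z)\ge 0$ and $\Delta=h(W_{b^K,0})-h(W_{b^K})$. You claim that truncation at $1/8$ ``loses nothing essential'' and that Jensen then yields $\psi$ of ``a quantity of order $\epsilon$''. This is false: the hypothesis $E[a]\ge\epsilon$ gives no lower bound on $E[\min\{a,\frac18\}]$. Take $a=0$ with probability $1-p$ and $a=\epsilon/p$ with probability $p\le 8\epsilon$. Then $E[a]=\epsilon$, while $E[\min\{a,\frac18\}]=p/8$ is arbitrarily small.

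Monotonicity of the pointwise jump does not help. Neither does the set $G$ on its own, since, as you note, $P(G)$ can be tiny. The pointwise bound of \cite[Theorem~2.10]{gavalakis2026entropy} is $\max\{f(h_1),f(h_2)\}$ measured from the \emph{average} $\frac12(h_1+h_2)$, with $f(x)\ge -x^5/33$ on $[-\frac18,0]$. Averaging it and applying Jensen gives only $\Delta\ge (E[\min\{a,\frac18\}])^5/33$, which can be useless.

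The missing idea is that dispersion of the per-input entropies itself produces an entropy jump. We have $h(Y_1+Y_2\mid z_1,z_2)\ge\max\{h(W_{b^K}\mid z_1),h(W_{b^K}\mid z_2)\}$ and $\max\{x,y\}-\frac{x+y}{2}=\frac{|x-y|}{2}$, so averaging gives $\Delta\ge\frac12E|a(Z_1)-a(Z_2)|$. Exactly this quantity controls the truncation loss. Write $\min\{a,\frac18\}=\max\{a,\frac18\}-|a-\frac18|$ and use Jensen, the triangle inequality and $\epsilon<\frac18$:
\begin{equation*}
E\bigl[\min\{a,\tfrac18\}\bigr]\ge\min\{E[a],\tfrac18\}-E\bigl|a-E[a]\bigr|\ge\epsilon-E|a(Z_1)-a(Z_2)|\ge\epsilon-2\Delta.
\end{equation*}
Combining the two bounds gives the self-bounding inequality $\Delta\ge(\epsilon-2\Delta)_+^5/33$. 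The supporting line of the convex map $t\mapsto(t_+)^5$ at $t=\epsilon$ then yields $\Delta\ge\epsilon^5/(33+10\epsilon^4)\ge\epsilon^5/34$.

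Your aside about comparing with $\frac12(h(Y_1)+h(Y_2))$ and averaging over both inputs points in this direction. But unless the max-versus-average gap is used to pay for the truncation, the argument does not close.

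With this ingredient your $G$-route can also be completed by a case split. One has $E|a(Z_1)-a(Z_2)|\ge\epsilon P(G^c)$, so if $P(G)\le\frac12$ then $\Delta\ge\epsilon/4$. Otherwise the pointwise jump on $G$ gives $\Delta\ge\epsilon^5/(64\cdot 33)$. This is a valid polynomial bound, but not the constant $1/34$.
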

\begin{proof}
    For almost surely-any pair of inputs $z_1$ and $z_2$ to the channels $W_{b^K}$, applying the entropy-jump inequality in~\cite[Theorem~2.10]{gavalakis2026entropy} yields
    \begin{align}
        &h(W_{b^K,0}| z_1, z_2) - \frac{h(W_{b^K}| z_2) + h(W_{b^K}| z_1)}{2} \nonumber\\
        &\qquad\ge \max\left\{ f(h(W_{b^K}| z_1)), f(h(W_{b^K}| z_2))\right\},
    \end{align}
    where
    \begin{align}
        f(x) = \begin{cases}
            \frac{-x^5 - 4x^6}{16 + 8x^4}, &-\frac{1}{8} \le x \le 0,\\
            f\left( \frac{-1}{8} \right), &x<-\frac{1}{8}.
        \end{cases}
    \end{align}
    Averaging over independent realizations of $z_1$ and $z_2$ and observing that, for $-\frac{1}{8} \le x \le 0$,
    \begin{align}
        \frac{-x^5 - 4x^6}{16 + 8x^4} \ge -\frac{x^5(1 + 4x)}{16 + \frac{1}{2}} \ge -\frac{x^5}{33},
    \end{align}
    we obtain, using Jensen's inequality,
    \begin{align}
        &h(W_{b^K,0}) - h(W_{b^K})\nonumber \\
        &\quad \ge E_{Z_1}\left[ \min \left\{ -\frac{h(W_{b^K}| Z_1)^5}{33}, \frac{(1/8)^5}{33} \right\}\right]
        \nonumber\\
        &\quad\geq \frac{\left(E_{Z_1}\left[\min\left\{ -h(W_{b^K}| Z_1), \frac{1}{8}\right\}\right]\right)^5}{33}. \label{eq:BoundWithMin}
    \end{align}
    Now, we can further bound,
    \begin{align}
        &E_{Z_1}\left[\min\left\{ -h(W_{b^K}| Z_1), \frac{1}{8}\right\}\right] \nonumber\\
        &= E_{Z_1}\left[\max\left\{ -h(W_{b^K}| Z_1), \frac{1}{8}\right\} - \left| h(W_{b^K}| Z_1) + \frac{1}{8}\right|\right]\nonumber\\
        &\ge \max\left\{ -h(W_{b^K}), \frac{1}{8}\right\} \nonumber\\
        &\phantom{=}- E_{Z_1}\left[\left| h(W_{b^K}| Z_1) + \frac{1}{8} - h(W_{b^K}) + h(W_{b^K})\right|\right]\nonumber\\
        &\ge \max\left\{ -h(W_{b^K}), \frac{1}{8}\right\} - \left|\frac{1}{8} + h(W_{b^K})\right| \nonumber\\
        &\phantom{=}- E_{Z_1}\left[\left| h(W_{b^K}| Z_1) - h(W_{b^K}))\right|\right]\nonumber\\
        &= \min\left\{ -h(W_{b^K}), \frac{1}{8}\right\} - E_{Z_1}\left[\left| h(W_{b^K}| Z_1) - h(W_{b^K}))\right|\right]\nonumber\\
        &\ge \epsilon - E_{Z_1}\left[ \left| h(W_{b^K}| Z_1) - h(W_{b^K})\right| \right]\nonumber\\
        &\ge \epsilon - E_{Z_1, Z_2}\left[ \left| h(W_{b^K}| Z_1) - h(W_{b^K}| Z_2)\right| \right]\nonumber\\
        &= \epsilon - 2E_{Z_1, Z_2}\Big[ \max\left\{ h(W_{b^K}| Z_1) , h(W_{b^K}| Z_2)\right\} \nonumber\\
        &\phantom{=} \hspace{1.1in} - \frac{h(W_{b^K}| Z_1) + h(W_{b^K}| Z_2)}{2} \Big].
        \label{eq:MinBoundL4}
        \end{align}
        From Jensen's inequality we have that the entropy
        jump is nonnegative:
        \begin{equation}
        \max\{ h(W_{b^K}| Z_1) , h(W_{b^K}| Z_2)\} \le
        h(W_{b^K,0}|Z_1,Z_2).
        \end{equation}
        Using this and the identity 
        \begin{equation}
        E[h(W_{b^K}| Z_1)] =  E[h(W_{b^K}| Z_2)],
        \end{equation}
        gives
        \begin{multline}
        E_{Z_1}\left[\min\left\{ -h(W_{b^K}| Z_1), \frac{1}{8}\right\}\right] \\
        \ge \epsilon - 2\left( h(W_{b^K,0}) - h(W_{b^K}) \right). \label{eq:AbsUB}
    \end{multline}
    Substituting~(\ref{eq:AbsUB}) in~(\ref{eq:BoundWithMin}), we obtain
    \begin{align}
        &h(W_{b^K,0}) - h(W_{b^K})\nonumber\\
        &\qquad \ge \frac{\left(\epsilon - 2\left( h(W_{b^K,0}) - h(W_{b^K}) \right)\right)_+^5}{33}.
    \end{align}
    The function $t\mapsto(t_+)^5$ is convex on the whole real line. Its supporting-line inequality at $t=\epsilon$ therefore gives
\begin{align}
&(\epsilon-2( h(W_{b^K,0}) - h(W_{b^K})))_+^5
    \nonumber\\
&\quad \ge\epsilon^5-10\epsilon^4\left( h(W_{b^K,0}) - h(W_{b^K}) \right).
\end{align}
Rearranging terms,
    \begin{align}
        h(W_{b^K,0}) - h(W_{b^K}) &\ge \frac{\epsilon^5}{33 + 10 \epsilon^4} \ge \frac{\epsilon^5}{34},
    \end{align}
    which is the required result.
\end{proof}

\section{Examples} \label{speedSec}

Given a uniform source $X \sim \mathrm{Unif}[0,1]$ on the circle~$\mathbb{T}$ identified as the interval $[0,1]$ with modulo-1 addition,  consider the additive cosine channel
\begin{equation}
\label{eq:origchannel}
    Y = X + Z \mod 1,
\end{equation}
where the noise $Z$ is independent of $X$ and has density 
\begin{equation}
    p_Z(\theta) = 1 + \alpha \cos(2\pi \theta), 
\end{equation}
for some $|\alpha|\leq 1$. Under the polar transform in (\ref{eq:HadamardPT}) applied to $N$ independent copies of $X$ and $Y$, we compute the subchannel densities using their Fourier parameterization~\cite{de2024fourier} as
\begin{multline}
    P(U_i|U_{1}^{i-1}, X^N) \\= 1 + 2\Re \left(\sum\limits_{k=1}^Nc_k(X^N, U_1^{i-1})\exp\left(j2\pi kU_i\right)\right).
\end{multline}
This representation includes the original channel density and is preserved under every recursive update in~\eqref{eq:PT1}-\eqref{eq:PT2}.

For several values of $N$ and $\alpha=0.98$, Fig.~\ref{fig:first} shows Monte Carlo estimates of the \emph{subchannel mutual informations}
\begin{equation}
\mathrm{MI}_i = 
 h(U_i) -h(U_i |U_1^{i-1}, X^N) =
 -h(U_i | U_1^{i-1}, X^N).
 \label{eq:MI}
\end{equation}
Each estimate is averaged over $100$ independent trials, and the $N$ subchannels are ordered by their estimated mutual information. Fig.~\ref{fig:second} plots the fraction of the unpolarized channels, defined here as the fraction of indices $i$ for which $\mathrm{MI}_i > 0.05$ nats, for increasing values of $N$. The upper bound on the polarization rate from Theorem~\ref{thm:main} is vacuous for the small values of $N$ considered here and is thus omitted from the plot. The reference curve in the figure suggests that the actual rate of polarization in this example is closer to $\frac{1}{\log N}$.

It is easy to see that the rate of polarization can be no faster than $1/\log N$ in the worst case. Suppose $X = 0$ and $Y$ is uniformly distributed over $[0,1/2]$.  Then for $i<N$, the component with the smallest conditional entropy, $U_N$, has a conditional distribution given $U_i$ that is uniformly distributed over an interval of approximate width $1/N$ with high probability. Thus we have
\begin{equation}
- h(U_N|U_1,\ldots,U_{N-1}) = \Theta(\log N),
\end{equation}
and $-h(U_i|U_1,\ldots,U_{i-1}) = O(\log N)$
for all $i$. Since the average conditional entropy
satisfies
\begin{equation}
\frac{1}{N}
\sum_{i = 1}^N h(U_i|U_1,\ldots,U_{i-1}) = - \log 2
\end{equation}
it follows a fraction $\Omega(1/\log N)$ of the subchannels must have entropy bounded away from zero. 

Thus one cannot hope to achieve fast polarization in general
using Arıkan’s polar transform. Whether the polarization
can be accelerated using other transforms, at least
in examples of interest in applications, is a subject
of ongoing work.

\newpage

\section*{AI Statement}
OpenAI's ChatGPT 6.0~\cite{openai2026chatgpt} was used to assist with editing the manuscript, exploring proof techniques, and developing the code for the simulations in Section~\ref{speedSec}. In particular, the argument for the lower bound in equations~(\ref{eq:MinBoundL4})--(\ref{eq:AbsUB}) was suggested by ChatGPT. The authors take full responsibility for the contents in this manuscript.

\IEEEtriggeratref{9}

\bibliographystyle{plain}
\bibliography{ref.bib}

\end{document}